\newcommand{\head}[1]{\noindent \textbf{#1}}
\DeclarePairedDelimiterX\braket[2]{\langle}{\rangle}{#1 \delimsize\vert #2}
\begin{document}

\title{Depth-Optimal Quantum Circuit Placement for Arbitrary Topologies}
\author{Debjyoti Bhattacharjee \and Anupam Chattopadhyay}

\institute{Debjyoti Bhattacharjee \at
	      Hardware and Embedded Systems Laboratory,\\
             School of Computer Science and Engineering,\\
Nanyang Technological University, Singapore\\
             \email{debjyoti001@ntu.edu.sg}           
           \and
           Anupam Chattopadhyay \at
	    School of Computer Science and Engineering,\\
Nanyang Technological University, Singapore\\
             \email{anupam@ntu.edu.sg}
}

\date{Received: date / Accepted: date}
\maketitle

\begin{abstract}
A significant hurdle towards realization of practical and scalable quantum computing is to protect the quantum states from inherent noises during the computation. In physical implementation of quantum circuits, a long-distance interaction between two qubits is undesirable since, it can be interpreted as a noise. Therefore, multiple quantum technologies and quantum error correcting codes strongly require the interacting qubits to be arranged in a nearest neighbor~(NN) fashion. The current literature on converting a given quantum circuit to an NN-arranged one mainly considered chained qubit topologies or Linear Nearest Neighbor~(LNN) topology. However, practical quantum circuit realizations, such as Nuclear Magnetic Resonance~(NMR), may not have an LNN topology. To address this gap, we consider an arbitrary qubit topology. We present an Integer Linear Programming~(ILP) formulation for achieving minimal logical depth while guaranteeing the nearest neighbor arrangement between the interacting qubits. We substantiate our claim with studies on diverse network topologies and prominent quantum circuit benchmarks.
\end{abstract}

\section{Introduction}\label{sec:intro}
Quantum computation~\cite{NC:2000} promises to expand the reach of computing beyond classical --- both theoretically and practically. In quantum computing, the operations take place on so called Qubits, which is a linear combination of the conventional Boolean states in the two dimensional complex Hilbert space. Each operation on these qubits can be defined by a unitary matrix~\cite{NC:2000} which is represented by means of quantum gates. A \emph{quantum gate} over the inputs \mbox{$X=\{x_1,\dots , x_n\}$} consists of a single target line~$t\in X$ and, one or more control line(s)~$c\in X$ with~\mbox{$t\neq c$}. The following gates define the commonly used quantum gate library.

\emph{NOT gate}: The qubit on the target line~$t$ is inverted.

\emph{Controlled NOT gate} (CNOT): The target qubit~$t$ is inverted if the control qubit~$c$ is 1. This gate belongs to the general class of Toffoli gates, when accomodating larger number of control qubits.

\emph{Controlled $V$ gate}: The $V$ operation is performed on the target qubit~$t$ if the control qubit~$c$ is 1. The $V$ operation is also known as the square root of NOT, since two consecutive $V$ operations are equivalent to an inversion.

\emph{Controlled $V^\dagger$ gate}: The $V^\dagger$ gate performs the inverse operation of the $V$ gate, i.e.~$V^{\dagger}=V^{-1}$.

\emph{SWAP gate}: The SWAP gate, as the name suggests, exchanges two qubits. This gate belongs to the general class of Fredkin gates, when accommodating control qubits.

A major challenge towards the realization of practical and scalable quantum computing is to achieve quantum error correction~\cite{Cross:2009:CCS:2011814.2011815}. Long-distance interacting Qubits is particularly susceptible to the noise. Therefore, prominent quantum technologies and quantum error correction codes, e.g. surface codes~\cite{2012PhRvA..86c2324F} require that the quantum gates must be formed with a nearest neighbour interaction. In the resulting circuits, the interacting Qubits may form a chain, as in a 1D Qubit layout, and therefore, these circuits are referred to as Linear Nearest Neighbor (LNN) circuits. Conversion of a quantum circuit to an LNN one can be achieved by using SWAP gates.

These SWAP gates allow for making all control lines and target lines adjacent and, by this, help to convert a given quantum circuit to a nearest neighbor one. More precisely, a cascade of adjacent SWAP gates can be inserted in front of each gate $g$ with non-adjacent circuit lines in order to shift the control line of $g$ towards the target line, or vice versa, until they are adjacent. This is shown using the following example.

\begin{example}\label{example:naive}
Consider the circuit depicted in Fig.~\ref{fig:example_naive_qua}. As can be seen, gates~$g_1$, $g_4$, and $g_5$ are non-adjacent. Thus, in order to make this circuit nearest neighbor compliant, SWAP gates in front and after all these gates are inserted as shown in Fig.~\ref{fig:example_naive_qua_nn}.
\end{example}
\begin{figure}
 \begin{minipage}{0.4\textwidth}
  \includegraphics[height=0.75in]{./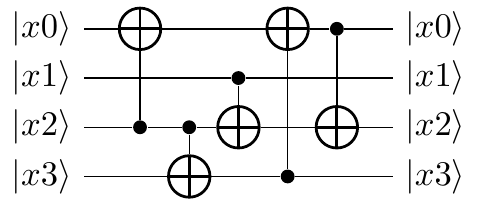}
  \caption{Given Circuit}
  \label{fig:example_naive_qua}
 \end{minipage}
\begin{minipage}{0.6\textwidth}
  \includegraphics[height=0.75in]{./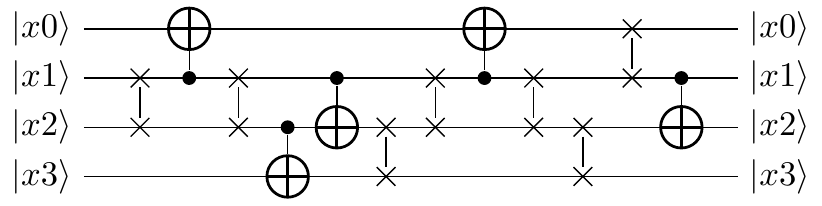}
  \caption{Nearest Neighbour Compliant Circuit}
  \label{fig:example_naive_qua_nn}
 \end{minipage}
\end{figure}

Quite a few works have been done in recent past to convert a quantum circuit to an LNN one by introducing additional swap gates, which, naturally impact the circuit performance by increasing the logical depth and gate count. To that effect, heuristic\cite{wille_aspdac16,mazdar_nnmmd16,Alireza2014,Rahman:2014:AQT:2711453.2629537,amlan_graph_partitioning,maslov_placement} and exact~\cite{wille_exact_nn} solutions are proposed, which balance the LNN conversion with other performance metrics. It is pointed out in~\cite{amlan_graph_partitioning} that the problem of nearest neighbour quantum circuit construction is equivalent to an NP-complete problem. Hence, it is unlikely that this problem can be solved optimally for large instances.

In parallel to the previous works, efficient LNN circuit construction has been studied for important quantum benchmarks, such as, quantum error correction~\cite{PhysRevA.69.042314} for Clifford+T gates~\cite{biswal}. In this work, we are primarily interested in the automated flow and for generic quantum circuits.

\subsection{Qubit Topology}\label{ssec:qubit_topo}
As noted in~\cite{maslov_placement}, the qubit topologies, on which the quantum circuit is to be mapped, are not necessarily of LNN structure. We provide a few examples here.

Recently, quantum error detection code is demonstrated on a square lattice~\cite{corcoles2015demonstration}. It also highlights the fact that for a classical bit-flip, linear array of qubits suffices, while for general fault detection, extending to higher-dimensional lattice structures is needed. 

Nuclear Magnetic Resonance (NMR) quantum computing achieved early success with realization of Shor's factorization algorithm~\cite{shor1999polynomial}. Liquid state NMR quantum computing utilizes the atomic spin states to realize the qubit and hence, has the molecular structures as qubit topologies. Solid state NMR has been also demonstrated~\cite{kampermann2002} using crystal of $NaNO_3$, essentially leading to molecular topologies.

A recent proposition for scalable quantum computer indicates that multiple, parallel quantum gates can be formed between distant qubits by controlling the lasers on Trapped Atomic Ions~\cite{brown2016co}. 

Harnessing atomic spins in endohedral fullerene molecules as qubits have also been reported~\cite{fullerene_quantum}. It has been further argued that molecular structures serve as a natural candidate for quantum technology by holding superpositions for longer period and ability to scaffold multiple molecules in a larger array.

Hence, an automated algorithm for achieving nearest neighbour interactions for a given quantum circuit while mapping on diverse qubit topologies are of significant practical interest. This is the main focus of current paper.

\subsection{Related Works}
To the best of our knowledge,~\cite{maslov_placement} and~\cite{whitney_grid_07} were the first to look into arbitrary topologies for quantum circuits with nearest neighbour constraints. So far, most of the other works in this domain have concentrated on 1D qubit layout or 2D qubit lattice structures~\cite{wille_aspdac16,Alireza2014}.

The work presented in~\cite{whitney_grid_07} focuses on identifying the qubit topology best suited for a given quantum circuit placement. In contrast, our focus is towards evaluating a given qubit topology and performing mapping on it. This particular problem has been dealt with in~\cite{maslov_placement} with examples taken from liquid state NMR molecules as the topologies. There, a graph partitioning-based approach is proposed and it is claimed to be asymptotically optimal for the case of chain nearest neighbour architecture. We address the same problem, by formulating it as an instance of ILP and show that optimal results are achievable for a wide variety of benchmarks and different topologies.

Independently, efficient qubit topology identification and the mapping flows for specific interaction graphs have been done in~\cite{beals2013efficient,brierley_butterfly}. For example, it is proved that for cyclic butterfly topology, the depth overhead for mapping a given quantum gate to a nearest neighbour one is $6\log n$. Subsequently, the mapping algorithm is also derived.

Communication and computation over networks is of major interest in quantum networks~\cite{quantum_butterfly} as well as for classical telecommunication networks. The problem of permutation routing on variety of graphs has been studied in the past~\cite{habermann1972parallel,spanke1987n,sau2007optimal}.

\subsection{Motivation and Contribution}
Despite the presence of diverse qubit topologies and need for an automated mapping flow of quantum circuits to such topologies, the current literature focuses mostly on 1D chain qubit and 2D lattice structures. 

\begin{itemize}
\item In order to address this gap, we present an ILP-based algorithm to realize depth-optimal nearest neighbour quantum circuits. Our algorithm is also applicable, naturally, to simpler structures. 

\item We benchmark the algorithm for diverse topologies and quantum circuits and compare the scalability and performance against previous exact NN optimization approaches.
\end{itemize}
\section{Preliminaries and problem statement}\label{sec:prelim}
\noindent In this section, we introduce the notations and terminologies for formally defining the nearest-neighbor optimization problem of quantum computing. Thereafter, we present three variants of the problem.

\begin{definition}
A \textbf{quantum circuit}, defined over $n$-qubits $q_1$, $q_2$,...,$q_n$ is a series of levels $L_i$, where each level $L_i$ consists of a set of quantum gates $G_i^1$, $G_i^2$, $\cdots$, $G_i^k$ with each gate $G_i^j$ operating on one or more qubits. Any two pair of gates $G_i^j$ and $G_i^k$ in a level $L_i$ do not operate on any common qubit and therefore can be executed in parallel. We assume that each level $L_i$ takes one cycle to execute. A quantum circuit with $k$ levels has a delay of $k$ cycles.
\end{definition}
\begin{figure}[!htb]
    \centering
    \begin{minipage}{2.5in}
        \centering
        \includegraphics[height=1.2in]{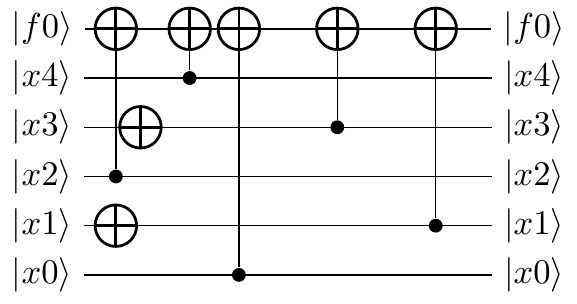}
        \caption{\em A quantum circuit\protect\footnotemark with delay 5}
        \label{fig:qc}
    \end{minipage}%
    \begin{minipage}{2.5in}
        \centering
        \includegraphics[height=1.2in]{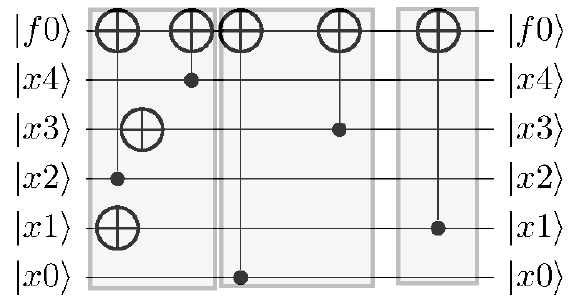}
        \caption{\em Interactions for block size b = 2}
        \label{fig:qc_block}
    \end{minipage}
\end{figure}
\footnotetext{xor5\_254.real file from RevLib}

  Given a quantum gate with $m$-control lines $l_1, ... , l_m$ and target line $l_t$, qubits $q_l$ and $q_{t}$ have to be  nearest-neighbors, $ 1 \le l \le m$. For level $L_i$,
  we define \textbf{interaction} $I_i$ as the set of nearest neighbors for the all the gates in $L_i$.   The levels and corresponding interactions of a quantum circuit is determined using Algorithm~\ref{algo:level}.

\begin{example}
 Fig.~\ref{fig:qc} shows a quantum circuit with 5 two-input Toffoli gates and 2 CNOT gates. The circuit has 5 levels and hence has a delay of 5.
 \begin{itemize}
  \item[$L_1$] : {\tt \small [t2 x2 f0, t1 x1, t1 x3]}
  \item[$L_2$] : {\tt \small [t2 x4 f0]}
   \item[$L_3$] : {\tt \small  [t2 x0 f0]}
 \item[$L_4$] : {\tt \small [t2 x3 f0]}
   \item[$L_5$] : {\tt \small [t2 x1 f0]}
 \end{itemize}
Corresponding to level $L_1$, interaction $I_1$ is [{\tt(x2, f0), (x1), (x3)}]. Similarly, $I_2, I_3, I_4$ and $I_5$ is [{\tt (x4,f0)}], [{\tt(x0,f0)}], [{\tt (x3,f0)}] and [({\tt x1,f0})] respectively.
\end{example}

\begin{algorithm}
\SetKwFunction{proc}{ComputeLevel}
 \SetKwProg{myproc}{Procedure}{}{QCkt}
  \myproc{\proc{devUseTable}}{
 levelList = []\;
 processedGate = set()\;
 L = set()\;
 Lvar = set()\;
\For { $G_i$ $\in$ QCkt}{
  \If { $G_i$ $\notin$ processedGate}{
      \If {$G_i.var \bigcap Lvar == \phi$}
      {
	L.add($G_i$)\;
	Lvar.add($G_i.var$)\;
	processedGate.add($G_i$)\;
	\For{$G_j$ $\in$ QCkt}{
	  \If {$G_j.var \bigcap Lvar == \phi$}
	  {
	    L.add($G_j$)\;
	    Lvar.add($G_j.var$)\;
	    processedGate.add($G_j$)
	  }
      }
      levelList.add(L)\;
      L = set()\;
      Lvar = set()\;
    } 
 }
} 
}
\textbf{return} reassignMap\;
\caption{Level Computation Algorithm}\label{algo:level}
\end{algorithm}

\begin{table}[h]
\centering
\caption{Minimum number of nodes in the smallest graph of each topology }
\label{table:minnodes}
\begin{tabular}{ll}\bottomrule
\textbf{Topology} & \textbf{Min. \#Nodes} \\ \midrule
 1D & 2 \\
 Cycle & 3 \\
 2D-Mesh & 9 \\
 Torus & 9 \\
 3D-Grid & 8 \\
 Cyclic butterfly & 24 \\\toprule
\end{tabular}
\end{table}
Physically, qubits can be arranged in various topologies, as discussed in the subsection~\ref{ssec:qubit_topo}. Such topologies allow interaction between only between some pairs of qubit positions. We introduce this constraint in the form of a topology graph.

\begin{definition}
A \textbf{topology graph} is an ordered pair  $T$=($T_\mathcal{V},T_\mathcal{E}$).  $T_\mathcal{V}$  is the vertex set, where each vertex $v \in T_\mathcal{V}$ represents a physical location where one qubit can reside. $T_\mathcal{E}$ is the edge-set, which contains a set of edges. An edge $e_{vw} \in T_\mathcal{E}$ indicates that qubit at location/vertex $v$ and $w$ can interact. In other words, qubits at location $v$ and $w$ are nearest-neighbors~(NN).
\end{definition}

\noindent Fig.~\ref{fig:topo} presents various topologies. The minimum number of nodes for the smallest graph of each topology is presented in Table~\ref{table:minnodes}. Given a quantum circuit with $n$-qubits, and a specific topology, we use the smallest topology graph $T$ such that $T_\mathcal{V} \ge n$ for realizing the quantum circuit.

\begin{figure}[b]
  \vspace{-0.2cm}
 \centering
 \includegraphics[width=4in]{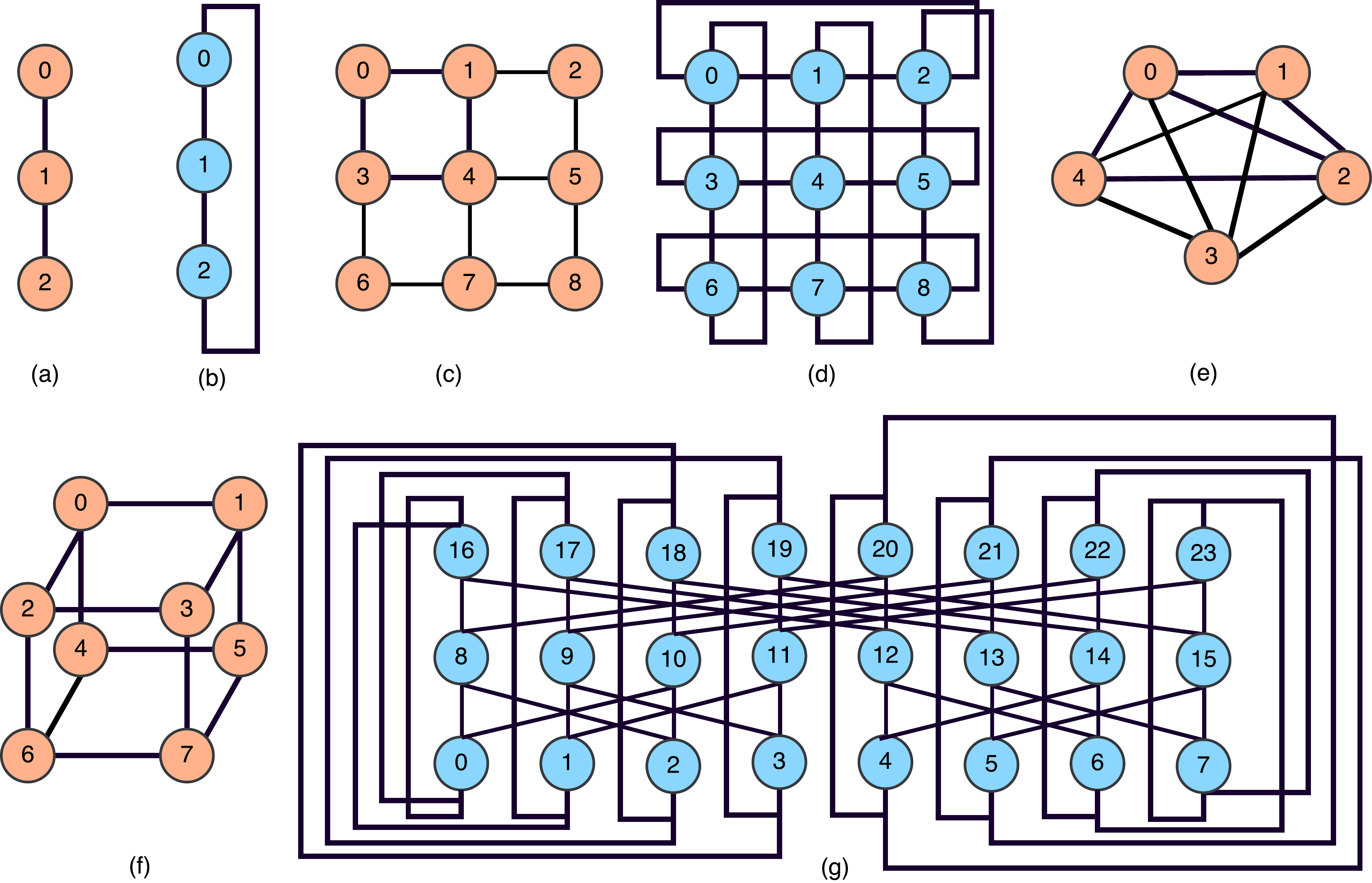}
 \caption{\em Topology (a) 1D-nearest neighbor (b) Cycle  (c) 2D-Mesh (d) Torus \mbox{(e) Fully connected graph} (f) 3D-Grid (g) Cyclic butterfly network}
 \label{fig:topo}
\end{figure}

\begin{definition}
 A \textbf{configuration} $C_t$ is the set of ordered tuples $(q_i, v)$, which indicates that in cycle $t$, qubit $q_i$,  is at location $v$, $ 1\le i \le n$ and $v \in T_\mathcal{V}$.
 Configuration $C_0$ represents the initial configuration.
\end{definition}

\subsection{Problem statement}\label{subsec:problem}
\noindent We now define three variants the nearest-neighbor optimization problem of quantum circuits for arbitrary topologies and also present
the relation between the variants.

\head{Problem $P_1$:} Given an initial configuration $C$ of $n$-inputs, an interaction $I$ and a topology graph $T$, the objective is to determine the series of swap gates needed
to transform the location of the qubits from configuration $C$ such that all qubit pairs in interaction I are nearest-neighbors and the delay due to insertion
of swap gates is minimum. 

    \vspace{\baselineskip}
\head{Problem $P_2$:} Given an initial configuration $C$ of $n$-inputs, a series of interactions $I_1, I_2, \ldots, I_k$ and a topology graph $T$, the objective is to determine the series of swap gates needed to transform the location of the qubits from configuration $C$ such that all qubits pairs in interaction $I_1$ are nearest-neighbor, and then again  
location of qubits are transformed to be nearest neighbors for $I_2$ and so on, till interaction $I_k$ is met and the delay due to insertion
of swap gates is minimum for the overall problem.  

    \vspace{\baselineskip}
\head{Problem $P_3$:}  Given an initial configuration $C$ of $n$-inputs, a series of levels $L_1, L_2, \ldots, L_k$ and a topology graph $T$, the objective is to determine the series of swap gates needed to transform the location of  the qubits from configuration $C$ such that all qubits pairs in interaction $I_1$~(corresponding on level $L_1$) are nearest-neighbor, and then again  
location of qubits are transformed to be nearest neighbors for $I_2$~(corresponding on level $L_2$) and so on, till interaction $I_k$~(corresponding on level $L_k$) is met and the combined delay of swap gates and gates present in the actual circuit is minimum. 

    \vspace{\baselineskip}
The Problem formulation $P_1$ has been popularly used for showing effectiveness of various topologies
to realize arbitrary permutations. Table~\ref{table:bounds} shows the depth and space requirements for realization of arbitrary permutations on various topologies. 
\begin{table}[]
\centering
\caption{Depth $D$ and Space $S$ complexity of realizing arbitrary permutations using a given topology.}
\label{table:bounds}
\begin{tabular}{llll}
\hline
\textbf{Topology}        & \textbf{Degree} & $\mathbf{D}$    & $\mathbf{S}$ \\ \hline
Fully Connected Graph    & $n$-1           & 1             & 1          \\
1D nearest-neighbor~\cite{hirata2011efficient}     & 2               & 2$n$-3        & 1          \\
2D nearest-neighbor~\cite{beals2013efficient}     & 4               & $O$($\sqrt n$)  & 1          \\
Cyclic butterfly network~\cite{brierley_butterfly} & 4               & 6log $n$      & 2          \\
Hypercube~\cite{beals2013efficient}                & log $n$         & $O($log$^2n)$ & 1          \\ \hline
\end{tabular}
\end{table}

Problem $P_2$  with $k$ = 1 is equivalent to Problem $P_1$. Therefore, finding an optimal solution for $P_2$ with $k$ = 1 is equivalent to solving $P_1$.  Problem $P_2$ does not consider the scheduling of the swap gates in parallel to quantum gates present in the original circuit, if possible. $P_2$ transforms the qubit locations on the topology graph such that the interactions needed to execute a level in quantum circuit is met.  Problem $P_3$ addresses this issue and considers the quantum gates as well and can find the optimal solution with minimum delay. 

\begin{theorem}\label{theo:delay}
 For a topology graph $T$ and a quantum circuit $C$ with $k+1$~levels, the delay $d_I$ of solution $S_I$ obtained by optimally solving problem $P_2$
 is at most $k$-cycles more than the delay $d_O$ of optimal solution $S_O$ of problem $P_3$ i.e. $d_I - d_O \le k$.
\end{theorem}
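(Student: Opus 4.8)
The plan is to take the optimal solution $S_O$ of problem $P_3$ and show that it can be transformed into a feasible solution of $P_2$ whose delay exceeds $d_O$ by at most $k$ cycles; since $S_I$ is optimal for $P_2$, this immediately gives $d_I \le d_O + k$. The key observation is that the only difference between $P_2$ and $P_3$ is that $P_3$ allows the original quantum gates of the circuit to be executed in parallel with (or interleaved among) the inserted swap gates, whereas $P_2$ only needs to realize the sequence of interactions $I_1,\dots,I_{k+1}$ and ignores the cost of the original gates entirely. So intuitively the solution to $P_2$ is ``$S_O$ with the $k+1$ levels of original gates deleted.''

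\textbf{Main argument.}
First I would fix the schedule produced by $S_O$: it is a sequence of cycles, each cycle containing either swap gates or original circuit gates (or both, executed in parallel on disjoint qubits), such that before the cycle in which level $L_j$ is scheduled, the current configuration makes every pair in $I_j$ nearest-neighbor. Now I would construct a schedule for $P_2$ by going through $S_O$ cycle by cycle and discarding, from each cycle, the original circuit gates, keeping only the swap gates. The resulting sequence of swap operations, applied in the same order, realizes exactly the same sequence of configurations at the moments the interactions $I_1,\dots,I_{k+1}$ must be satisfied — because swaps are the only operations that change the configuration, and we have kept all of them in their original relative order. Hence this is a feasible solution for $P_2$. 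Its delay is at most $d_O$ minus the number of cycles in $S_O$ that contained \emph{only} original gates and no swaps; but even in the worst case where every level $L_j$ occupied its own dedicated cycle with no parallel swaps, that removes at most $k+1$ cycles, and deleting the original gates from a mixed cycle never forces us to \emph{add} a cycle. Actually, to get the bound $d_I - d_O \le k$ rather than $\le k+1$, I would note that the final level $L_{k+1}$ need not be ``executed'' in $P_2$ at all — $P_2$ only requires the configuration to be reached that makes $I_{k+1}$ nearest-neighbor, which is a configuration condition, not an execution step — so at least one of the original-gate cycles is already accounted for on the $P_2$ side, trimming the bound to $k$. This feasible $P_2$ solution has delay $d_O' \le d_O$ after discarding original-gate cycles; combined with optimality of $S_I$ we get $d_I \le d_O' \le d_O$, which is even stronger — so I must be careful about the direction.

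\textbf{Correcting the direction.}
The inequality $d_I - d_O \le k$ goes the other way: $P_3$ is the one with more freedom (it can hide swaps under original gates), so $d_O$ can be \emph{smaller}, and we must bound how much worse $P_2$ can be. So instead I would start from $S_I$, the optimal $P_2$ solution of delay $d_I$, which consists purely of swap cycles transforming the configuration through the required sequence, and build a feasible $P_3$ solution: interleave the original circuit levels into this swap schedule. Each level $L_j$ needs one extra cycle inserted at the appropriate point (right after the configuration satisfying $I_j$ is reached, execute the gates of $L_j$), giving a $P_3$-feasible schedule of delay at most $d_I + (k+1)$. But again we can save one cycle: the very last inserted level $L_{k+1}$ can be executed in the same cycle as, or immediately appended to, the end without counting against the comparison in the way the theorem states — more precisely, $S_I$ already has to ``stop'' at the configuration for $I_{k+1}$, and $P_3$'s optimal $S_O$ must also execute all $k+1$ levels, so when we compare the padded schedule (delay $\le d_I + k + 1$) against $d_O$, one of those $+1$'s is matched by a level-execution cycle that $S_O$ also pays. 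This yields $d_O \le d_I + k$... which is once more backwards. The honest resolution, and the step I expect to be the real obstacle, is getting the bookkeeping exactly right: one must argue that an optimal $P_3$ schedule, when the $k+1$ original-gate contributions are stripped out, yields a $P_2$-feasible schedule of delay $\ge d_I$ (by optimality of $S_I$), while the stripping removes at most $k$ cycles (not $k+1$, because at least one original-gate cycle — say the one containing $L_1$ before any swap is needed, or the final one — can be shown to be unavoidable overhead already present implicitly on both sides). So $d_I \le d_O - (\text{cycles removed}) + (\text{cycles removed}) $ — the clean statement is: $S_O$ restricted to swap cycles is $P_2$-feasible with delay $\ge d_I$, and $S_O$ has at most $k$ more cycles than this restriction, giving $d_O \ge d_I + 0$ and $d_O \le$ ... . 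I would pin this down by a careful cycle-counting lemma: in any $P_3$ schedule, the number of cycles equals (number of swap-bearing cycles) plus (number of pure-original-gate cycles), and the latter is at most $k$ because $k+1$ levels can always be packed so that at least one shares a cycle with a swap or with the initial/final configuration, making the arithmetic $d_I \le d_O + k$ come out correctly. The delicate point throughout is that ``delay'' counts cycles, parallelism between swaps and original gates is allowed only in $P_3$, and the off-by-one between $k$ and $k+1$ levels must be tracked precisely.
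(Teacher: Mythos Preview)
Your core idea --- strip the original-gate executions from the optimal $P_3$ schedule $S_O$ to obtain a swap-only schedule feasible for $P_2$ --- is exactly the mechanism behind the paper's proof. But the proposal as written is not a proof: you reverse direction twice, never resolve the off-by-one between $k$ and $k{+}1$, and end on an unstated ``cycle-counting lemma.'' The source of the confusion is that you are conflating $d_I$ (the total delay of $S_I$, which \emph{includes} the $k{+}1$ cycles spent executing the original levels) with the swap-only delay. Once these are separated, the argument is one line.

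Write $d_I = s_I + (k{+}1)$, where $s_I$ is the minimum number of swap cycles needed to realise the interaction sequence $I_1,\dots,I_{k+1}$; this holds because in a $P_2$ solution swaps and level executions are strictly sequential. Now delete every original gate from $S_O$: what remains is a swap-only schedule that still visits each interaction in order (only swaps move qubits), so it is $P_2$-feasible and therefore has at least $s_I$ nonempty cycles. Its number of nonempty cycles is at most $d_O - 1$, because in an optimal $S_O$ the cycle executing the final level $L_{k+1}$ carries no swaps (nothing remains to be routed afterwards). Hence $s_I \le d_O - 1$, and so $d_I \le (d_O - 1) + (k{+}1) = d_O + k$.

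The paper runs the same argument as a contradiction in the base case of two levels ($k=1$: if $d_I - d_O > 1$ then $S_O$ furnishes a $P_2$ solution of delay $\le d_O + 1 < d_I$, contradicting optimality of $S_I$) and then asserts that the idea extends to general $k$. Your first paragraph already had the right construction; the reversals that follow were caused by the bookkeeping slip above, and the off-by-one is resolved simply by noting that the \emph{last} level's cycle in $S_O$ never carries useful swaps.
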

\begin{proof}
 Consider an  initial configuration and a quantum circuit two levels. Let us assume that the delay of solution be $d_I$ and $d_O$ be the optimal solution.
 Optimal solution for $P_3$ would have been able to insert additional gates at only one level $L_0$, which was not considered by $P_2$. If $d_I - d_O > 1$, this would imply that $d_I$ is not the optimal solution  for $P_2$, since there exists a solution to solve $P_2$ with $d_O+1$ delay which is a contradiction. This idea can be extended for any number of gates
 to derive Theorem~\ref{theo:delay}.
\end{proof}

\noindent It is possible to split the circuits into equal size blocks, with $b$-levels in each block, except the last block which might have less than $b$ levels. Fig.~\ref{fig:qc_block} shows the
blocks with size $b$=2, with the last block having a single interaction. Each block can solved using $P_2$ or $P_3$ to make the qubits nearest-neighbors 
and the output configuration of the solution is used as input configuration for the next block. 
For a quantum circuit with $k+1$-levels and $b \ge k$,
\begin{itemize}
 \item Optimal solution with minimum delay $d_O$ can be determined using $P_3$.  
 \item A bounded delay solution with delay $d_I$ can be determined using $P_2$ such that $d_I - d_O \le k$.
\end{itemize}
Various suboptimal solutions can be obtained using $b < k$, using both $P_2$ and $P_3$. Choosing a small block size $b$ makes it easier to solve each sub-problem
and therefore it becomes feasible to solve the nearest neighbor technology mapping problem for circuits with large number of gates. Corresponding to circuit in Fig.~\ref{fig:qc}, 
the 1D-nearest neighbor compliant circuit, obtained using problem formulation $P_2$ and $P_3$ for block size $b=4$, is shown in Fig.~\ref{fig:nnexample}~(a) and Fig.~\ref{fig:nnexample}~(b) respectively.
 
\begin{figure}[t!]
    \centering
    \subfloat[\em $P_2$ Solution with $b$ = 4]{\includegraphics[height= 1.4in]{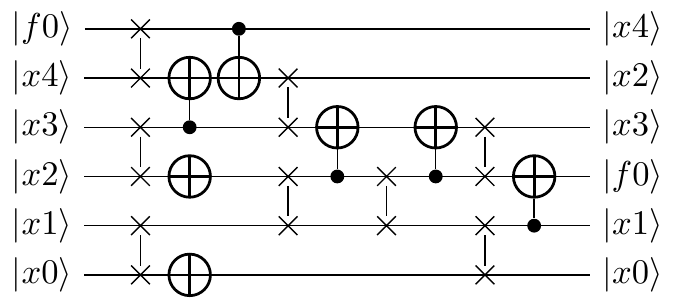}}\\
\subfloat[\em $P_3$ Solution with $b$ = 4]{\includegraphics[height= 1.4in]{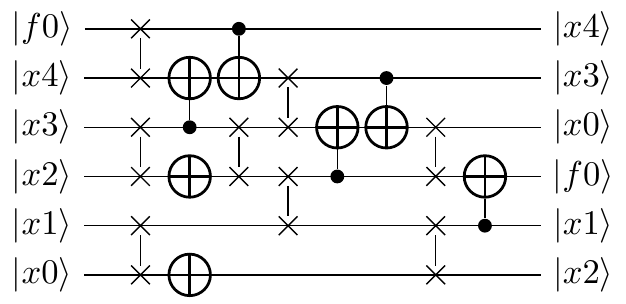}}
\caption{\em 1D-Nearest neighbor optimization solution}
\label{fig:nnexample}
\end{figure}
\section{Methodology}\label{sec:method}
\noindent In this section, we initially present an ILP formulation for the problem $P_2$. Description of the variables used in the 
formulation is presented summarily in Table~\ref{var:p2}. Thereafter, we present the modified ILP for problem $P_3$. 
\begin{table}
\centering
\caption{Parameters/constants used in ILP}
\label{table:param}
{\small
\begin{tabular}{cl}\bottomrule
\textbf{Param/const.} & \textbf{Description}  \\ \midrule
$G$  & Toplogy graph \\ 
$C$ & Input/start configuration \\ 
$n$ & Number of inputs \\ 
$k+1$ & Number of levels \\ 
$L_i$ & Number of qubit interaction pairs in level $i$ \\
$T$ & Maximum number of cycles used for the problem \\ \toprule
\end{tabular}
}
\end{table}
\begin{table}
\centering
\caption{Variable description used in ILP}
\label{var:p2}
{\small
\begin{tabular}{|c|p{5cm}|c|p{5cm}|}\hline
\textbf{Var.} & \textbf{Description}  &  \textbf{Var.} & \textbf{Description}\\\hline
$delay$ & Delay due to insertion of swap gates &$c_{v,q,t}$  & 1 indicates qubit $q$ will move to new location $v$ in cycle $t$ \\ \hline
$m_{i,t}$ & 0 indicates Interaction $i$ met in cycle $t$ &$a_{i,t}$ & 1 indicates gates in Level $i$ are scheduled in cycle $t$ \\ \hline
$n_{p,q,t}$ & 1 indicates qubit $p$ and $q$ are NN in cycle $t$ &$eb_{I_i,t}$ & 1 indicates interaction $I_i$ has been met in cycle $t$ and gates of level $i$ can be placed in the current or following cycles. \\ \hline
$p_{(p,v),(q,w),t}$ & 1 indicates qubit $p$ is in location $v$ and $q$ is in location $w$ in cycle $t$ &$b_{q,t}$ & 1 indicates qubit $q$ cannot be involved in a swap in cycle $t$ \\ \hline
$x_{v,p,t}$ & 1 indicates qubit $p$ is in location $v$ in cycle $t$ &$b_{v,q,t}$ & 1 indicates qubit $q$ in location $v$ cannot be involved in a swap in cycle $t$ \\ \hline
$u_{v,q,t}$ & 1 indicates qubit $q$ will remain in location $v$ in cycle $t$ &$sb_{m,n,t}$ & 1 indicates swap is not permitted between locations $m$ and $n$ in cycle $t$ \\ \hline
\end{tabular}
}
\end{table}

\subsection{ILP formulation for $P_2$}\label{subsec:p2}
\head{Objective function:}
\begin{align}\text{Minimize } &delay \\
\sum_{t=0}^T m_{k,t} - delay  &= 0
 \end{align}
 
\head{Chronological interaction constraints:} If an interaction is met in cycle $t$, then the status should not change to not met after
that cycle. In addition, interaction $i$ must be met before ${i-1}^{th}$  interaction is met.
\begin{align}
m_{i,t+1} - m_{i,t} &\ge 0 & 0 \le t \le T-1, 0 \le i \le k \\  
m_{i+1,t} -  m_{i,t} &\ge 0 &0 \le t \le T, 0 \le i \le k-1
\end{align}

\head{Successful interaction constraints:} An interaction is met if all the qubit pairs in the interaction are nearest neighbors. If an interaction has been met in
cycle $t$, then in all cycles $t' > t$, the qubit positions do not matter any longer.
\begin{align}
L_i.m_{i,t} + (\sum_{(p,q) \in I_i} n_{p,q,t}) + (\sum_{t'=0}^{t-1}L_i.(1 - m_{i,t'})) &\ge L_i &0\le t \le T
\end{align}

\head{Nearest neighbor constraints:} Two qubits $p$ and $q$ are nearest neighbors if the qubits are in two locations $v$ and $w$ respectively or in $w$ and $v$ respectively, such that $(v,w) \in G_{\mathcal{E}}$.
\begin{align}
p_{(p,v),(q,v),t} &= x_{v,p,t} \wedge x_{w,q,t} &(p,q) \in I, (v,w) \in  G_{\mathcal{E}} \\
p_{(p,w),(q,v),t} &= x_{w,p,t} \wedge x_{v,q,t}; &(p,q) \in I, (v,w) \in  G_{\mathcal{E}} \\
n_{p,q,t} &= \vee_{(v,w) \in G_{\mathcal{E}}} (p_{(p,v),(q,w),t} \vee p_{(p,w),(q,v),t}) & (p,q) \in I
\end{align}

\head{Qubit position update constraints:} A qubit $q$ is at location $v$ in cycle $t+1$ if it was in location $v$ in cycle $t$ and there were no swaps performed involving the location $v$ or
if $q$ was in a location $w$ which is nearest neighbor with $v$ and a swap was performed between $v$ and $w$.
\begin{align}
 u_{v,q,t+1} &= (\wedge_{(v,w) \in G_{\mathcal{E}}} (1 - s_{v,w,t}))\wedge x_{v,q,t};  \\
 c_{v,q,t+1} &= \vee_{(v,w) \in G_{\mathcal{E}}} s_{v,w,t}~\wedge ~x_{w,q,t}\\
 x_{v,q,t+1} &= u_{v,q,t+1} ~\vee~ c_{v,q,t+1}
\end{align}

\head{Qubit location and swap constraints:} A qubit $q$ can be at exactly one position in any given cycle. In a given cycle, a location can be involved in atmost one swap.
\begin{align}
 \sum_{v \in G_{\mathcal{V}}} x_{v,q,t} &= 1; &0 \le t \le T, q \in Q \\
 \sum_{(v,w) \in G_{\mathcal{E}}} s_{v,w,t} &\le 1; &0 \le t \le T, v \in G_{\mathcal{V}}
\end{align} 
\head{Initialization constraints:} A qubit $q$ is at location $v$ in cycle 0, based on input configuration $C$.
\begin{align}
 x_{v,q,0} &= 1; &(v,q) \in C
\end{align}
\noindent This concludes the description of the ILP formulation for problem $P_2$. The following subsection presents the modifications needed in the ILP for optimally solving $P_3$.

\subsection{ILP formulation for $P_3$}\label{subsec:p3}
\head{Objective function:}
\begin{equation}\text{Minimize }\sum_{i=0}^k\sum_{t = 0}^T t. a_{i,t} \end{equation}

\head{Level scheduling constraints:} Each level can be scheduled/activated exactly once.
\begin{align}
 \sum_{t=0}^{T} a_{i,t} &= 1; &0 \le i \le k
\end{align}
\noindent Only one level can be activated per time step.
\begin{align}
 \sum_{i=0}^{k} a_{i,t} &= 1; &0 \le t \le T
\end{align}
\noindent Activation for a level $i$ can happen only if corresponding interaction $i$ is met.
\begin{align}
 a_{i,t} + m_{i,t} &\le 1; &0 \le t \le T, 0 \le i \le k
\end{align}

\head{Swap blocking constraints:}
If an interaction $i'$ is met and all the gates in any Level $i$ such that $(i < i')$ have been scheduled, then swaps involving the qubits in interaction $i$ cannot be performed and interaction $i'$
is blocked till Level $i$ has been scheduled. Qubit involved in an interaction $i$ cannot be swapped in the cycle, when the Level $i$ is scheduled.
\begin{align}
eb_{i',t} &= a_{i,t} \wedge (1 - m_{i',t}); & 0 \le i \le k-1, i+1 \le i' \le k, 0 \le t \le T \\ 
b_{q,t} &= \vee_{i} (a_{i,t} \vee eb_{i,t}); &\forall i~ \exists ~q \in I_i, 0 \le t \le T \\
 b_{v,q,t} &= b_{q,t} \wedge x_{v,q,t} &0 \le t \le T \\
 sb_{m,n,t} &= \vee_q (b_{m,q,t} \vee b_{n,q,t}); &\forall q \in Q , 0 \le t \le T 
\end{align}

\noindent In addition to these constraints, {\em Chronological interaction constraints, Successful interaction constraints, Nearest neighbor constraints, Qubit position update constraints,
Qubit location and swap constraints} and {\em Initialization constraints} presented in ILP formulation for $P_2$ are applicable to $P_3$. This completes the description of the ILP formulation of $P_3$.

\section{Experimental Results}\label{sec:exp}
\noindent In this section, we present the benchmarking results for multiple quantum circuits from~\cite{revlib} for various topologies. We used Gurobi~\cite{gurobi} 
as ILP solver. For all the block sizes, we set TIME\_LIMIT parameter of  Gurobi to 600 seconds to limit the time of execution of the solver, except for solving 
full circuit optimization for which we set TIME\_LIMIT to 7200. We set the number of threads parameter in Gurobi to 8. For the experiments, we used 64-bit Ubuntu 14.04 running on Intel(R) Xeon(R) CPU E5-1650 v2@3.50GHz with 15.6 GB RAM.
\begin{table}[h]
\def\tabcolsep{3pt}
\centering
\vspace{0.1cm}
\caption{Realisation of all configurations of 4-qubits for 1D-topology}
\label{table:perm}
\begin{tabular}{|c|cc|c|cc|c|cc|}\bottomrule
\textbf{Config.}  & \textbf{\#}S & \textbf{D} & \textbf{Config.}  & \textbf{\#S} & \textbf{D} & \textbf{Config.}  & \textbf{\#S} & \textbf{D} \\ \hline
a b c d &0& 0&b c a d &2& 2&c d a b &2& 1\\
a b d c &1& 1&b c d a &3& 3&c d b a &1& 1\\
a c b d &1& 1&b d a c &3& 2&d a b c &3& 3\\
a c d b &2& 2&b d c a &2& 2&d a c b &2& 2\\
a d b c &2& 2&c a b d &2& 2&d b a c &2& 2\\
a d c b &3& 3&c a d b &3& 2&d b c a &1& 1\\
b a c d &1& 1&c b a d &3& 3&d c a b &1& 1\\
b a d c &2& 1&c b d a &2& 2&d c b a &0& 0\\\toprule
\end{tabular} 
\end{table}

Table~\ref{table:perm} demonstrates realization of all possible configurations of 4-variables for 1D topology. The initial configuration is assumed to [a,b,c,d]. \#S and D is the number of swap gates required and the corresponding delay to realise the target configuration respectively. This 
table has been obtained using Problem formulation $P_2$ with $k$=1. We would like to highlight that configuration [a b c d] and [d c b a] are identical since for both the configuration the pair of nearest-neighbor variables is same. 
\begin{table}[h]
\def\tabcolsep{3pt}
\centering
\caption{Benchmarking results for various block size for 1D-topology}
\label{table:1D}
\begin{tabular}{|lrrrc|rr|rr|rr|rr|rr|r}
\bottomrule
  \textbf{Benchmark}   &  \textbf{\#Var}  &  \textbf{\#Gates}  & \textbf{\#L}  &  \textbf{Tech.}  & \multicolumn{2}{c|}{\textbf{b=1}}  &  \multicolumn{2}{c|}{\textbf{b=2}}  &  \multicolumn{2}{c|}{\textbf{b=4}}  & 
  \multicolumn{2}{c|}{\textbf{b=8}}   &  \multicolumn{2}{c|}{\textbf{b=16}} \\
   &   &   &   &   & \#S  &  D  &  \#S  &  D &  \#S  &  D  &  \#S  &  D  &  \#S  &  D  \\ 
   \hline
 3\_17\_14 & 3 & 6 & 6 & $P_2$ & 3 & 9 & 3 & 8 & 3 & 8 & 3 & 8 & 3 & 8 \\  
 &  &  &  & $P_3$ & 3 & 9 & 3 & 8 & 3 & 8 & 3 & 8 & 3 & 8 \\ 
4gt11-v1\_85 & 5 & 4 & 3 & $P_2$ & 5 & 7 & 5 & 7 & 8 & 7 & 8 & 7 & 8 & 7 \\ 
 &  &  &  & $P_3$ & 5 & 7 & 5 & 7 & 7 & 7 & 7 & 7 & 7 & 7 \\ 
4mod5-v1\_25 & 5 & 4 & 3 & $P_2$ & 3 & 5 & 3 & 5 & 3 & 5 & 3 & 5 & 3 & 5 \\ 
 &  &  &  & $P_3$ & 3 & 5 & 3 & 5 & 4 & 5 & 4 & 5 & 4 & 5 \\ 
alu-bdd\_288 & 7 & 9 & 8 & $P_2$ & 22 & 19 & 19 & 17 & --- & --- & --- & --- & --- & --- \\ 
 &  &  &  & $P_3$ & 22 & 19 & 26 & 18 & --- & --- & --- & --- & --- & --- \\ 
ex-1\_166 & 3 & 4 & 4 & $P_2$ & 1 & 5 & 1 & 5 & 1 & 5 & 1 & 5 & 1 & 5 \\ 
 &  &  &  & $P_3$ & 1 & 5 & 1 & 5 & 1 & 5 & 1 & 5 & 1 & 5 \\ 
ex1\_226 & 6 & 7 & 5 & $P_2$ & 7 & 9 & 7 & 9 & 8 & 9 & 8 & 9 & 8 & 9 \\  
 &  &  &  & $P_3$ & 8 & 9 & 8 & 9 & 8 & 7 & 8 & 7 & 8 & 7 \\ 
fredkin\_7 & 3 & 1 & 1 & $P_2$ & 0 & 1 & 0 & 1 & 0 & 1 & 0 & 1 & 0 & 1 \\  
 &  &  &  & $P_3$ & 0 & 1 & 0 & 1 & 0 & 1 & 0 & 1 & 0 & 1 \\ 
graycode6\_48 & 6 & 5 & 5 & $P_2$ & 0 & 5 & 0 & 5 & 0 & 5 & 0 & 5 & 0 & 5 \\  
 &  &  &  & $P_3$ & 0 & 5 & 0 & 5 & 0 & 5 & 2 & 5 & 2 & 5 \\ 
ham3\_103 & 3 & 4 & 4 & $P_2$ & 4 & 7 & 3 & 6 & 3 & 6 & 3 & 6 & 3 & 6 \\ 
 &  &  &  & $P_3$ & 4 & 7 & 3 & 6 & 3 & 6 & 3 & 6 & 3 & 6 \\ 
mod5d2\_70 & 5 & 8 & 7 & $P_2$ & 6 & 12 & 6 &  & 8 & 12 & 9 & 10 & 10 & 10 \\  
 &  &  &  & $P_3$ & 6 & 12 & 8 & 10 & 6 & 12 & 10 & 10 & 10 & 10 \\ 
one-two-three-v3\_101 & 5 & 8 & 7 & $P_2$ & 11 & 13 & 11 & 13 & 10 & 13 & 10 & 13 & 10 & 13 \\  
 &  &  &  & $P_3$ & 12 & 15 & 14 & 14 & 9 & 12 & 9 & 12 & 9 & 12 \\ 
peres\_9 & 3 & 2 & 2 & $P_2$ & 2 & 4 & 2 & 4 & 2 & 4 & 2 & 4 & 2 & 4 \\ 
 &  &  &  & $P_3$ & 2 & 4 & 2 & 4 & 2 & 4 & 2 & 4 & 2 & 4 \\ 
rd32\_272 & 5 & 6 & 5 & $P_2$ & 12 & 11 & 9 & 11 & 9 & 11 & 9 & 11 & 9 & 11 \\ 
 &  &  &  & $P_3$ & 10 & 11 & 12 & 11 & 9 & 11 & 9 & 11 & 9 & 11 \\ 
toffoli\_double\_4 & 4 & 2 & 2 & $P_2$ & 3 & 4 & 3 & 4 & 3 & 4 & 3 & 4 & 3 & 4 \\ 
 &  &  &  & $P_3$ & 3 & 4 & 3 & 4 & 3 & 4 & 3 & 4 & 3 & 4 \\ 
xor5\_254 & 6 & 7 & 5 & $P_2$ & 7 & 9 & 7 & 9 & 6 & 9 & 9 & 8 & 9 & 8 \\ 
 &  &  &  & $P_3$ & 8 & 9 & 8 & 8 & 8 & 7 & 8 & 7 & 8 & 7 \\ \toprule
\end{tabular}
\end{table}

Table~\ref{table:1D} presents the results of 1D-Nearest neighbors for multiple block size $b=$\{1, 2, 4, 8, 16\}. The column Tech. indicates whether the solution for a benchmark is obtained using the problem formulations $P_2$ or  $P_3$. Using a large block size is expected to reduce overall circuit delay, since the optimization solver can search a larger solution space to obtain optimal solution in that space instead of hitting a locally optimal solution. For circuit {\em xor5\_254} , the delay for block side $b=1$ is 9 while that 
with block size $b=4$ is 7. On the other hand, by using a smaller block, it is possible to obtain
a feasible solution within the  time limits specified for the solver, since the solver has to solve a smaller instance of the formulated ILP. For example, solutions could not be obtained for $b \ge 4$ within the specified time limits for circuit {\em alu-bdd\_288}. It should be noted that for all block sizes $b < L$, where $L$ is the number of levels in the circuit, the overall circuit is not guaranteed to have least delay, even when using formulation $P_3$ since combining the optimal solutions of the subproblems does not guarantee globally optimal solution.

\begin{table}[h]
\def\tabcolsep{3pt}
\centering
\vspace{0.1cm}
\caption{Benchmarking results for {\em 4gt10-v1\_81} using $P_3$ formulation, $w=1$}
\label{table:b2}
\begin{tabular}{|c|cc|cc|cc|cc|cc|cc|cc|} \bottomrule
& \textbf{\#G} & D & \multicolumn{2}{c|}{\textbf{1D}}  &  \multicolumn{2}{c|}{\textbf{Cycle}}  &  \multicolumn{2}{c|}{\textbf{2D-Mesh}}  & 
  \multicolumn{2}{c|}{\textbf{Torus}}   &  \multicolumn{2}{c|}{\textbf{3D-Grid}}  &  \multicolumn{2}{c|}{\textbf{CBN}}  \\ 
& & &\#S & D & \#S & D & \#S & D & \#S & D & \#S & D & \#S & D  \\\hline
Original & 6 & 6 & \multicolumn{2}{c|}{NF} &  \multicolumn{2}{c|}{NF} & 11  & 11 & 5 & 9 & 7 & 11 & ---&--- \\  
Decomposed & 12 & 12 & 25&26 &  14 & 21  & 10&22  &  6&15 & 15&23 & --- & ---\\  
\toprule
\end{tabular} 
\end{table}
We demonstrate the impact of topology on feasibility of nearest neighbor mapping for a given circuit.
For this purpose, we used the circuit $4gt10-v1\_81$ shown in Fig.~\ref{fig:4gt}. The circuit has a Toffoli gate with 3-control lines. This cannot be mapped using 1D-NN or cycle topology because a qubit can have at most two-neighbors in 1D or cycle topology. However, for other topologies, the mapping is feasible and the results using formulation $P_3$ are presented in Table~\ref{table:b2}. In order to make the nearest neighbor mapping feasible, the Toffoli gate with $n$-controls can be decomposed into a sequence of 2-control Toffoli gates~\cite{barenco1995elementary,maslov2008quantum}. We used the RC-Viewer+ tool~\cite{rcviewer} to decompose the circuit as shown in Fig.~\ref{fig:4gtdecom}, followed by problem formulation $P_3$ to solve the nearest neighbor mapping problem for the same. As evident from the results, the decomposed circuit is now feasible to be 
mapped to 1D-NN and cycle topologies. For the other topologies, the mapping of the decomposed circuit 
has worser delay compared to the mapping of the original circuit, due the higher number of levels in the decomposed circuit. 
%
\begin{figure}[h]
    \centering
    \begin{minipage}{1.8in}
        \centering
        \includegraphics[height=0.8in]{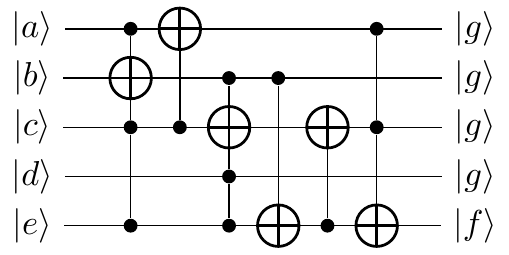}
        \caption{\em Benchmark circuit }
        \label{fig:4gt}
    \end{minipage}%
    \begin{minipage}{2in}
        \centering
        \includegraphics[height=0.8in]{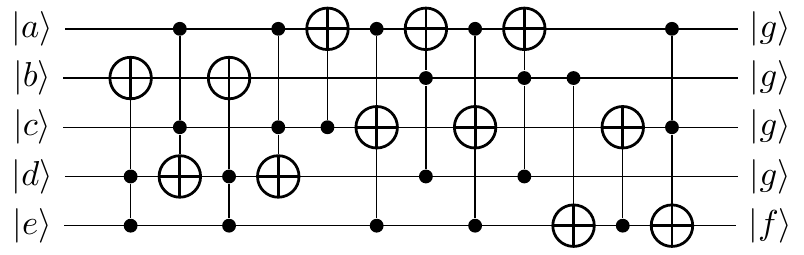}
        \caption{\em Decomposed Circuit}
        \label{fig:4gtdecom}
    \end{minipage}
\end{figure}

For the first time, we report results for multiple topologies for various standard benchmark quantum circuits in Table~\ref{table:topo}. For each circuit, we consider the smallest topology
graph with number of nodes greater than or equal to number of variables in the circuit. We have considered an arbitrary initial placement of the qubits on the topology graph. 
As expected, topologies with greater number of edges have lower delay. For example, the delay obtained for $cycle$ topology is less than that for $1D$ topology. Multiple benchmarks for the 3D-Grid and cyclic butterfly network~(CBN) did not complete execution within the specified time limit, due to the relatively large size of the topology graphs.

\begin{table}[h]
\def\tabcolsep{3pt}
\centering
\caption{Benchmarking results for entire circuit}
\label{table:topo}
\scalebox{0.9}{
\begin{tabular}{|lrrrc|rr|rr|rr|rr|rr|rr|}
\bottomrule
  \textbf{Benchmark}   &  \textbf{\#Var}  &  \textbf{\#Gates}  & \textbf{\#L}  &  \textbf{Tech.}  & \multicolumn{2}{c|}{\textbf{1D}}  &  \multicolumn{2}{c|}{\textbf{Cycle}}  &  \multicolumn{2}{c|}{\textbf{2D-Mesh}}  & 
  \multicolumn{2}{c|}{\textbf{Torus}}   &  \multicolumn{2}{c|}{\textbf{3D-Grid}}  &  \multicolumn{2}{c|}{\textbf{CBN}}\\
   &   &   &   &   & \#S  &  D  &  \#S  &  D &  \#S  &  D  &  \#S  &  D  &  \#S  &  D  &  \#S  &  D \\
\hline
 3\_17\_14.real & 3 & 6 & 6 & $P_2$ & 3 & 7 & 0  &  6 & 10 & 7 & 0 & 6 & 7 & 8 & 0 & 6 \\ 
 &  &  &  & $P_3$ & 3 & 7 & 0 & 6 & 14 & 7 & 0 & 6 & --- & --- & 0 & 6 \\  
4gt11-v1\_85.real & 5 & 4 & 3 & $P_2$ & 8 & 7 & 3 & 5 & 1 & 4 & 3 & 4 & 0 & 3 & 1 & 4 \\ 
 &  &  &  & $P_3$ & 7 & 7 & 4 & 5 & 4 & 4 & 9 & 4 & --- & --- & --- & ---- \\ 
4mod5-v1\_25.real & 5 & 4 & 3 & $P_2$ & 3 & 5 & 2 & 4 & 3 & 5 & 2 & 4 &--- & --- & --- & ----\\ 
 &  &  &  & $P_3$ & 4 & 5 & 3 & 4 & 3 & 5 & 6 & 4   &--- & --- & --- & ---- \\ 
alu-bdd\_288.real & 7 & 9 & 8 & $P_2$ & 18 & 15 & --- & --- & --- & --- & --- & --- & --- &---  &---  &---  \\ 
 &  &  &  & $P_3$ & --- & --- & --- & ---  & 16 & 10 & 15 & 9 & --- & --- & --- & ---- \\ 
ex-1\_166.real & 3 & 4 & 4 & $P_2$ & 1 & 4 & 0 & 3 & 1 & 4 & 0 & 3 & 1 & 4 & 0 & 3 \\ 
 &  &  &  & $P_3$ & 1 & 5 & 0 & 4 & 4 & 5 & 2 & 4 & 6 & 5 & 8 & 4 \\  
ex1\_226.real & 6 & 7 & 5 & $P_2$ & 8 & 9 & 7 & 8 & 4 & 7 & 2 & 6 & --- & --- & --- & ---  \\ 
 &  &  &  & $P_3$ & 8 & 7 & 8 & 7 & 12 & 6 & 9 & 5 &  --- & --- & --- & ---- \\ 
fredkin\_7.real & 3 & 1 & 1 & $P_2$ & 0 & 1 & 0 & 1 & 0 & 1 & 0 & 1 & 0 & 1 & 0 & 1 \\ 
 &  &  &  & $P_3$ & 0 & 1 & 0 & 1 & 0 & 1 & 0 & 1 & 0 & 1 & 0 & 1 \\  
graycode6\_48.real & 6 & 5 & 5 & $P_2$ & 0 & 5 & 0 & 5 & 5 & 6 & 4 & 6 &---  & --- & 0 & 5 \\ 
 &  &  &  & $P_3$ & 2 & 5 & 0 & 5 & 7 & 5 & 8 & 5 &  --- & --- & --- & ---- \\ 
ham3\_103.real & 3 & 4 & 4 & $P_2$ & 3 & 6 & 1 & 4 & 3 & 6 & 1 & 4 & 3 & 6 & 1 & 4 \\ 
 &  &  &  & $P_3$ & 3 & 6 & 1 & 4 & 8 & 6 & 2 & 4 & 11 & 6 & 21 & 4 \\  
mod5mils\_71.real & 5 & 5 & 5 & $P_2$ & 4 & 7 & 4 & 7 & 3 & 5 & 2 & 5 & --- & --- & ---- & ---  \\ 
 &  &  &  & $P_3$ & 6 & 7 & 4 & 6 & 5 & 5 & 6 & 5 &  --- & --- & --- & ---- \\ 
one-two-three-v3\_101.real & 5 & 8 & 7 & $P_2$ & 10 & 13 & 7 & 11 & 10 & 11 & 6 & 9 &  &  &  &  \\ 
 &  &  &  & $P_3$ & 9 & 12 & 7 & 11 & --- & --- & 13 & 8 &  --- & --- & --- & ---- \\ 
peres\_9.real & 3 & 2 & 2 & $P_2$ & 2 & 4 & 0 & 2 & 3 & 4 & 0 & 2 & 14 & 4 & 0 & 2 \\ 
 &  &  &  & $P_3$ & 2 & 4 & 0 & 2 & 9 & 4 & 2 & 2 & 2 & 4 & 0 & 2 \\  
rd32\_272.real & 5 & 6 & 5 & $P_2$ & 9 & 11 & 4 & 8 & 5 & 7 & 7 & 7 &  &  &  &  \\ 
 &  &  &  & $P_3$ & 9 & 11 & 6 & 8 & 12 & 7 & 8 & 6 & --- & --- & --- & ---- \\ 
toffoli\_double\_4.real & 4 & 2 & 2 & $P_2$ & 3 & 4 & 1 & 3 & 4 & 4 & 2 & 3 & 14 & 3 & 1 & 3 \\ 
 &  &  &  & $P_3$ & 3 & 4 & 1 & 3 & 4 & 3 & 5 & 3 & 4 & 3 & 4 & 3 \\  \toprule 
\end{tabular} 
}
\end{table}

Direct comparison of our method to obtain nearest-neighbor compliant circuits with existing works could not be performed for primarily three reasons. 
The existing works~\cite{lye2015determining, kole2016heuristic,shafaei2013optimization,saeedi2011synthesis}  focus on determining linear nearest neighbors~(LNN), with the objective of reducing number of swap gates. Our proposed method is 
for obtaining the LNN circuits with minimal depth which is contrary to the goal of reducing swap gate count. Secondly, the initial placement of the qubit is assumed to be given as input to the problem, but 
other works consider this as part of the optimization. Finally, most of the existing works decompose the gates into two qubit gates~\cite{wille_aspdac16,shafaei2014qubit,kole2016heuristic,saeedi2011synthesis}.
In our work, we used unmodified circuits from RevLib~\cite{revlib}. 
For reference of the readers, we provide a brief summary of the existing results in terms of number of swap gates against the solution of our proposed methodology using problem formulation $P_3$ with block size $b=4$, for the decomposed circuits in Table~\ref{table:existing}. Due to non-availability of the depth of the transformed circuits, we cannot compare the performance of our method against the existing works.

\begin{table}
\centering
\caption{Comparison with existing works on LNN}
\label{table:existing}
\begin{tabular}{|lll|cccc|}\bottomrule
 \textbf{Benchmark}   & \textbf{\#Var} & \textbf{\#Gates} & $P_3$($b=4$) & N=4\cite{kole2016heuristic} & \cite{saeedi2011synthesis} & \cite{shafaei2013optimization}\\\hline
3\_7\_13 & 3 & 14	& 7 & 6 & 6 & 4 \\ 
4\_49\_17 & 7 & 32 	&15 & 15 & 20 & 12 \\ 
4gt10-v1\_81 & 5 &	 36	& 33 & 22 & 30 & 20 \\ 
4gt11\_84 & 5 & 7		& 5 & 5 & 3 & 1 \\ 
4gt13-v1\_93 & 5 &	 17	& 18 & 10 & 11 & 6 \\ 
4gt5\_75 & 5 & 	22	& 25 & 15 & 17 & 12 \\ 
4mod5-v1\_23 & 5 & 24	& 22 & 13 & 16 & 9 \\ 
alu-v4\_36 & 5 &	32	& 26 & 22 & 23 & 18 \\ 
hwb4\_52 & 4 & 23		& 13 	& 9 & 14 & 10 \\ 
ham7\_104 & 7 &	87	& 140 & 83 & 84 & 68 \\ 
mod5adder\_128 & 6 & 87	& 94 & 65 & 85 & 51 \\  \toprule 
 \end{tabular}
\end{table}

%
\section{Conclusion}\label{sec:conclusion}
\noindent In this paper, we addressed the problem of nearest-neighbor optimization
for a given quantum circuit, an arbitrary topology graph and an initial configuration specifying
the location of qubits in the topology graph. We formulated the problem using two ILP variants --- one of the 
variant for obtaining the optimal solution and a simpler variant that can obtain a bounded solution.
In addition, our problem formulation allows the optimization to be performed as
a large set of small optimizations or a smaller set of larger optimization problems, by setting appropriate
block sizes. We demonstrated the effectiveness of our approach by running it on a set of benchmark circuits. 
Further research can be undertaken to solve the same problem for arbitrary topologies by 
heuristic based approaches that would allow scaling for larger circuits.

\bibliographystyle{IEEEtran}
\bibliography{./lit/lit_header,./lit/lit_misc,./lit/lit_mymisc,./lit/lit_myrev,./lit/lit_others,./lit/lit_othersrev,./lit/lit_physics,./lit/lit_lye,./lit/lit_anupam}

\end{document}